\theoremstyle{definition}
\newtheorem{theorem}{Theorem}
\newtheorem{lemma}{Lemma}
\theoremstyle{remark}
\definecolor{blk}{RGB}{63,63,63}
\newcommand*{\mybox}[1]{%
  \framebox{\raisebox{0cm}[0.5\baselineskip][0.05\baselineskip]{%
    \hbox to 0.10cm {\hss#1\hss}}}\hspace{0.05cm}}
\begin{document}
\title{NP-Completeness and Physical Zero-Knowledge Proofs for Zeiger}
\author[1]{Suthee Ruangwises\thanks{\texttt{suthee@cp.eng.chula.ac.th}}}
\affil[1]{Department of Computer Engineering, Chulalongkorn University, Bangkok, Thailand}
\date{}
\maketitle

\begin{abstract}
Zeiger is a pencil puzzle consisting of a rectangular grid, with each cell having an arrow pointing in horizontal or vertical direction. Some cells also contain a positive integer. The objective of this puzzle is to fill a positive integer into every unnumbered cell such that the integer in each cell is equal to the number of different integers in all cells along the direction an arrow in that cell points to. In this paper, we prove that deciding solvability of a given Zeiger puzzle is NP-complete via a reduction from the not-all-equal positive 3SAT (NAE3SAT+) problem. We also construct a card-based physical zero-knowledge proof protocol for Zeiger, which enables a prover to physically show a verifier the existence of the puzzle's solution without revealing it.

\textbf{Keywords:} NP-completeness, zero-knowledge proof, card-based cryptography, Zeiger, puzzle
\end{abstract}

\section{Introduction}
\textit{Zeiger} (also known as \textit{Number Pointers} or \textit{Arrows}) is a pencil puzzle regularly published in a German magazine Denksel \cite{janko}. This puzzle also appears in many other places online, including mobile apps \cite{apps}. The puzzle consists of a $k \times \ell$ rectangular grid, with each cell having an arrow pointing in horizontal or vertical direction. Some cells also contain a positive integer. The player has to fill a positive integer into every unnumbered cell such that the integer in each cell is equal to the number of different integers in all cells along the direction an arrow in that cell points to (see Fig. \ref{fig1}).

\begin{figure}
\centering
\begin{tikzpicture}
\draw[step=0.8cm, line width=0.3mm] (0,0) grid (4.0,4.0);

\node[single arrow, draw=black, single arrow head extend=4pt, minimum height=5.5mm, minimum width=7mm, rotate=0] at (0.4,0.4) {};
\node[single arrow, draw=black, single arrow head extend=4pt, minimum height=5.5mm, minimum width=7mm, rotate=90] at (1.2,0.4) {};
\node[single arrow, draw=black, single arrow head extend=4pt, minimum height=5.5mm, minimum width=7mm, rotate=90] at (2.0,0.4) {};
\node[single arrow, draw=black, single arrow head extend=4pt, minimum height=5.5mm, minimum width=7mm, rotate=180] at (2.8,0.4) {};
\node[single arrow, draw=black, single arrow head extend=4pt, minimum height=5.5mm, minimum width=7mm, rotate=180] at (3.6,0.4) {};

\node[single arrow, draw=black, single arrow head extend=4pt, minimum height=5.5mm, minimum width=7mm, rotate=90] at (0.4,1.2) {};
\node[single arrow, draw=black, single arrow head extend=4pt, minimum height=5.5mm, minimum width=7mm, rotate=0] at (1.2,1.2) {};
\node[single arrow, draw=black, single arrow head extend=4pt, minimum height=5.5mm, minimum width=7mm, rotate=270] at (2.0,1.2) {};
\node[single arrow, draw=black, single arrow head extend=4pt, minimum height=5.5mm, minimum width=7mm, rotate=90] at (2.8,1.2) {};
\node[single arrow, draw=black, single arrow head extend=4pt, minimum height=5.5mm, minimum width=7mm, rotate=90] at (3.6,1.2) {};

\node[single arrow, draw=black, single arrow head extend=4pt, minimum height=5.5mm, minimum width=7mm, rotate=90] at (0.4,2.0) {};
\node[single arrow, draw=black, single arrow head extend=4pt, minimum height=5.5mm, minimum width=7mm, rotate=270] at (1.2,2.0) {};
\node[single arrow, draw=black, single arrow head extend=4pt, minimum height=5.5mm, minimum width=7mm, rotate=0] at (2.0,2.0) {};
\node[single arrow, draw=black, single arrow head extend=4pt, minimum height=5.5mm, minimum width=7mm, rotate=180] at (2.8,2.0) {};
\node[single arrow, draw=black, single arrow head extend=4pt, minimum height=5.5mm, minimum width=7mm, rotate=90] at (3.6,2.0) {};

\node[single arrow, draw=black, single arrow head extend=4pt, minimum height=5.5mm, minimum width=7mm, rotate=0] at (0.4,2.8) {};
\node[single arrow, draw=black, single arrow head extend=4pt, minimum height=5.5mm, minimum width=7mm, rotate=0] at (1.2,2.8) {};
\node[single arrow, draw=black, single arrow head extend=4pt, minimum height=5.5mm, minimum width=7mm, rotate=90] at (2.0,2.8) {};
\node[single arrow, draw=black, single arrow head extend=4pt, minimum height=5.5mm, minimum width=7mm, rotate=270] at (2.8,2.8) {};
\node[single arrow, draw=black, single arrow head extend=4pt, minimum height=5.5mm, minimum width=7mm, rotate=180] at (3.6,2.8) {};

\node[single arrow, draw=black, single arrow head extend=4pt, minimum height=5.5mm, minimum width=7mm, rotate=270] at (0.4,3.6) {};
\node[single arrow, draw=black, single arrow head extend=4pt, minimum height=5.5mm, minimum width=7mm, rotate=270] at (1.2,3.6) {};
\node[single arrow, draw=black, single arrow head extend=4pt, minimum height=5.5mm, minimum width=7mm, rotate=270] at (2.0,3.6) {};
\node[single arrow, draw=black, single arrow head extend=4pt, minimum height=5.5mm, minimum width=7mm, rotate=180] at (2.8,3.6) {};
\node[single arrow, draw=black, single arrow head extend=4pt, minimum height=5.5mm, minimum width=7mm, rotate=270] at (3.6,3.6) {};

\node at (2.0,0.4) {2};
\node at (2.8,2.0) {1};
\node at (3.6,2.8) {3};
\node at (2.8,3.6) {2};
\end{tikzpicture}
\hspace{1.2cm}
\begin{tikzpicture}
\draw[step=0.8cm, line width=0.3mm] (0,0) grid (4.0,4.0);

\node[single arrow, draw=black, single arrow head extend=4pt, minimum height=5.5mm, minimum width=7mm, rotate=0] at (0.4,0.4) {};
\node[single arrow, draw=black, single arrow head extend=4pt, minimum height=5.5mm, minimum width=7mm, rotate=90] at (1.2,0.4) {};
\node[single arrow, draw=black, single arrow head extend=4pt, minimum height=5.5mm, minimum width=7mm, rotate=90] at (2.0,0.4) {};
\node[single arrow, draw=black, single arrow head extend=4pt, minimum height=5.5mm, minimum width=7mm, rotate=180] at (2.8,0.4) {};
\node[single arrow, draw=black, single arrow head extend=4pt, minimum height=5.5mm, minimum width=7mm, rotate=180] at (3.6,0.4) {};

\node[single arrow, draw=black, single arrow head extend=4pt, minimum height=5.5mm, minimum width=7mm, rotate=90] at (0.4,1.2) {};
\node[single arrow, draw=black, single arrow head extend=4pt, minimum height=5.5mm, minimum width=7mm, rotate=0] at (1.2,1.2) {};
\node[single arrow, draw=black, single arrow head extend=4pt, minimum height=5.5mm, minimum width=7mm, rotate=270] at (2.0,1.2) {};
\node[single arrow, draw=black, single arrow head extend=4pt, minimum height=5.5mm, minimum width=7mm, rotate=90] at (2.8,1.2) {};
\node[single arrow, draw=black, single arrow head extend=4pt, minimum height=5.5mm, minimum width=7mm, rotate=90] at (3.6,1.2) {};

\node[single arrow, draw=black, single arrow head extend=4pt, minimum height=5.5mm, minimum width=7mm, rotate=90] at (0.4,2.0) {};
\node[single arrow, draw=black, single arrow head extend=4pt, minimum height=5.5mm, minimum width=7mm, rotate=270] at (1.2,2.0) {};
\node[single arrow, draw=black, single arrow head extend=4pt, minimum height=5.5mm, minimum width=7mm, rotate=0] at (2.0,2.0) {};
\node[single arrow, draw=black, single arrow head extend=4pt, minimum height=5.5mm, minimum width=7mm, rotate=180] at (2.8,2.0) {};
\node[single arrow, draw=black, single arrow head extend=4pt, minimum height=5.5mm, minimum width=7mm, rotate=90] at (3.6,2.0) {};

\node[single arrow, draw=black, single arrow head extend=4pt, minimum height=5.5mm, minimum width=7mm, rotate=0] at (0.4,2.8) {};
\node[single arrow, draw=black, single arrow head extend=4pt, minimum height=5.5mm, minimum width=7mm, rotate=0] at (1.2,2.8) {};
\node[single arrow, draw=black, single arrow head extend=4pt, minimum height=5.5mm, minimum width=7mm, rotate=90] at (2.0,2.8) {};
\node[single arrow, draw=black, single arrow head extend=4pt, minimum height=5.5mm, minimum width=7mm, rotate=270] at (2.8,2.8) {};
\node[single arrow, draw=black, single arrow head extend=4pt, minimum height=5.5mm, minimum width=7mm, rotate=180] at (3.6,2.8) {};

\node[single arrow, draw=black, single arrow head extend=4pt, minimum height=5.5mm, minimum width=7mm, rotate=270] at (0.4,3.6) {};
\node[single arrow, draw=black, single arrow head extend=4pt, minimum height=5.5mm, minimum width=7mm, rotate=270] at (1.2,3.6) {};
\node[single arrow, draw=black, single arrow head extend=4pt, minimum height=5.5mm, minimum width=7mm, rotate=270] at (2.0,3.6) {};
\node[single arrow, draw=black, single arrow head extend=4pt, minimum height=5.5mm, minimum width=7mm, rotate=180] at (2.8,3.6) {};
\node[single arrow, draw=black, single arrow head extend=4pt, minimum height=5.5mm, minimum width=7mm, rotate=270] at (3.6,3.6) {};

\node at (0.4,0.4) {2};
\node at (1.2,0.4) {3};
\node at (2.0,0.4) {2};
\node at (2.8,0.4) {2};
\node at (3.6,0.4) {2};

\node at (0.4,1.2) {2};
\node at (1.2,1.2) {3};
\node at (2.0,1.2) {1};
\node at (2.8,1.2) {3};
\node at (3.6,1.2) {2};

\node at (0.4,2.0) {1};
\node at (1.2,2.0) {1};
\node at (2.0,2.0) {1};
\node at (2.8,2.0) {1};
\node at (3.6,2.0) {1};

\node at (0.4,2.8) {3};
\node at (1.2,2.8) {2};
\node at (2.0,2.8) {1};
\node at (2.8,2.8) {3};
\node at (3.6,2.8) {3};

\node at (0.4,3.6) {3};
\node at (1.2,3.6) {3};
\node at (2.0,3.6) {2};
\node at (2.8,3.6) {2};
\node at (3.6,3.6) {3};
\end{tikzpicture}
\caption{An example of a $5 \times 5$ Zeiger puzzle (left) and its solution (right)}
\label{fig1}
\end{figure}

Solving Zeiger is difficult, as the puzzle's constraints involve relationships between multiple cells. Yuki Fujimoto, the developer of a Zeiger mobile app as well as many other pencil puzzle apps, stated that Zeiger is the most difficult pencil puzzle \cite{apps}. Unsurprisingly, the puzzle turns out to be NP-complete, as we will later prove in this paper.

Suppose Agnes, an expert in Zeiger, created a Zeiger puzzle and challenged her friend Brian to solve it. After several tries, he could not solve the puzzle and doubted whether it has a solution. Agnes needs to convince him that her puzzle has a solution \textit{without} revealing the solution itself (which would make the challenge pointless). In this situation, she needs a \textit{zero-knowledge proof (ZKP)}.

\subsection{Zero-Knowledge Proof}
A ZKP is an interactive proof between a prover $P$ and a verifier $V$. Both $P$ and $V$ are given a computational problem $x$, but only $P$ knows its solution $w$. A ZKP with perfect completeness and soundness must satisfy the following three properties.

\begin{enumerate}
	\item \textbf{Perfect Completeness:} If $P$ knows $w$, then $V$ always accepts.
	\item \textbf{Perfect Soundness:} If $P$ does not know $w$, then $V$ always rejects.
	\item \textbf{Zero-knowledge:} $V$ learns nothing about $w$. Formally, there exists a probabilistic polynomial time algorithm $S$ (called a \textit{simulator}) that does not know $w$ but has access to $V$, and the outputs of $S$ follow the same probability distribution as the ones of the real protocol.
\end{enumerate}

The concept of a ZKP was introduced in 1989 by Goldwasser et al. \cite{zkp0}. Although a ZKP exists for every NP problem \cite{zkp}, it is more reasonable to construct ZKPs for NP-complete problems, as $V$ cannot compute the solution by themselves in such problems. As one would expect, many popular pencil puzzles have been proved to be NP-complete, including Bridges \cite{npbridges}, Five Cells \cite{npfivecells}, Goishi Hiroi \cite{npgoishi}, Heyawake \cite{npheyawake}, Kakuro \cite{npsudoku}, Makaro \cite{npmakaro}, Nondango \cite{npnondango}, Nonogram \cite{npnonogram}, Numberlink \cite{npnumberlink}, Nurikabe \cite{npnurikabe}, Ripple Effect \cite{npripple}, Shikaku \cite{npripple}, Slitherlink \cite{npsudoku}, Sudoku \cite{npsudoku}, and Sumplete \cite{sumplete}.

\subsection{Physical Zero-Knowledge Proof Protocols}
Instead of simply constructing computational ZKPs for these puzzles via reductions, many researchers have developed physical ZKPs using a deck of playing cards. These card-based protocols have the benefit that they do not require computers and also allow external observers to verify that the prover truthfully executes them (which is a challenging task for digital protocols). In addition, they are more intuitive and easier to verify the correctness and security, even for non-experts, and thus can be used for didactic purpose.

There is a line of work dedicated to constructing card-based physical ZKP protocols for pencil puzzles, such as Five Cells \cite{decom}, Goishi Hiroi \cite{goishi}, Heyawake \cite{nurikabe}, Kakuro \cite{kakuro}, Makaro \cite{makaro}, Masyu \cite{slitherlink}, Meadows \cite{decom}, Nonogram \cite{nonogram}, Numberlink \cite{numberlink}, Nurikabe \cite{nurikabe}, Ripple Effect \cite{ripple}, Shikaku \cite{shikaku}, Slitherlink \cite{slitherlink}, Sudoku \cite{sudoku}, Sumplete \cite{sumplete}, and Toichika \cite{goishi}.

\subsection{Our Contribution}
In this paper, we prove that deciding solvability of a given Zeiger puzzle is NP-complete via a reduction from the not-all-equal positive 3SAT (NAE3SAT+) problem. We also construct a physical ZKP protocol for Zeiger using a deck of playing cards.

\section{NP-Completeness Proof of Zeiger}
In this section, we will prove that deciding whether a given Zeiger puzzle has a solution is NP-complete.

As the problem clearly belongs to NP, the nontrivial part is to prove the NP-hardness. We will do so by constructing a reduction from the not-all-equal positive 3SAT (NAE3SAT+) problem. The NAE3SAT+ is the problem of deciding whether there exists a Boolean assignment such that every clause has at least one true literal and at least one false literal, in a setting where each clause contains exactly three positive literals (see Fig. \ref{fig2}). This problem is known to be NP-complete \cite{sat}.

\begin{figure}
\centering
\begin{tikzpicture}
\node at (0,0) {$C_1 = x_1 \vee x_2 \vee x_3$};
\node at (0,-0.5) {$C_2 = x_2 \vee x_3 \vee x_5$};
\node at (0,-1) {$C_3 = x_1 \vee x_4 \vee x_5$};
\node at (0,-1.5) {$C_4 = x_2 \vee x_4 \vee x_5$};
\node at (0,-2) {};
\end{tikzpicture}
\hspace{3cm}
\begin{tikzpicture}
\node at (0,0) {$x_1 = \text{TRUE}$};
\node at (0,-0.5) {$x_2 = \text{FALSE}$};
\node at (0,-1) {$x_3 = \text{TRUE}$};
\node at (0,-1.5) {$x_4 = \text{TRUE}$};
\node at (0,-2) {$x_5 = \text{FALSE}$};
\end{tikzpicture}
\caption{An NAE3SAT+ instance (left) and one of its possible solutions (right)}
\label{fig2}
\end{figure}

Suppose we are given an NAE3SAT+ instance $S$ with $m$ clauses and $n$ variables. Let $C_1,C_2,...,C_m$ be the clauses of $S$ and $x_1,x_2,...,x_n$ be the variables of $S$. We assume that every variable appears in at least one clause (otherwise we can just remove unused variables).

We will transform $S$ into a Zeiger puzzle grid $G$ with size $(m+3) \times (n+5)$. The intuition is that each of the first $m$ rows of $G$ corresponds to each clause of $S$, each of the first $n$ columns of $G$ corresponds to each variable of $S$, and numbers 2 and 3 correspond to TRUE and FALSE, respectively.

Let $a(p,q)$ denotes the cell in the $p$-th row and $q$-th column of $G$. We define $a(p,q)$ as follows.

\begin{minipage}{\linewidth}
\begin{itemize}
	\item If $q \leq n$,
	\begin{itemize}
		\item if $p \leq m$,
		\begin{itemize}
			\item if $x_q$ appears in $C_p$, then $a(p,q)$ contains a down arrow with no number;
			\item otherwise, $a(p,q)$ contains a right arrow with a number 4;
		\end{itemize}
		\item if $p=m+1$, then $a(p,q)$ contains a right arrow with a number 4;
		\item if $p=m+2$, then $a(p,q)$ contains an up arrow with a number 2;
		\item if $p=m+3$, then $a(p,q)$ contains an up arrow with no number.
	\end{itemize}
	\item If $q=n+1$, then $a(p,q)$ contains a right arrow with a number 4.
	\item If $q=n+2$,
	\begin{itemize}
		\item If $p \leq m$, then $a(p,q)$ contains a left arrow with a number 3;
		\item otherwise, $a(p,q)$ contains a right arrow with a number 3.
	\end{itemize}
	\item If $q=n+3$, then $a(p,q)$ contains a right arrow with a number 2.
	\item If $q=n+4$, then $a(p,q)$ contains a right arrow with a number 1.
	\item If $q=n+5$, then $a(p,q)$ contains a left arrow with a number 4 (see Fig. \ref{fig3}).
\end{itemize}
\end{minipage}

\begin{figure}
\centering
\begin{tikzpicture}
\draw[step=0.8cm, line width=0.3mm] (0,0) grid (8.0,5.6);

\node[single arrow, draw=black, single arrow head extend=4pt, minimum height=5.5mm, minimum width=7mm, rotate=90] at (0.4,0.4) {};
\node[single arrow, draw=black, single arrow head extend=4pt, minimum height=5.5mm, minimum width=7mm, rotate=90] at (0.4,1.2) {};
\node[single arrow, draw=black, single arrow head extend=4pt, minimum height=5.5mm, minimum width=7mm, rotate=0] at (0.4,2.0) {};
\node[single arrow, draw=black, single arrow head extend=4pt, minimum height=5.5mm, minimum width=7mm, rotate=0] at (0.4,2.8) {};
\node[single arrow, draw=black, single arrow head extend=4pt, minimum height=5.5mm, minimum width=7mm, rotate=270] at (0.4,3.6) {};
\node[single arrow, draw=black, single arrow head extend=4pt, minimum height=5.5mm, minimum width=7mm, rotate=0] at (0.4,4.4) {};
\node[single arrow, draw=black, single arrow head extend=4pt, minimum height=5.5mm, minimum width=7mm, rotate=270] at (0.4,5.2) {};

\node[single arrow, draw=black, single arrow head extend=4pt, minimum height=5.5mm, minimum width=7mm, rotate=90] at (1.2,0.4) {};
\node[single arrow, draw=black, single arrow head extend=4pt, minimum height=5.5mm, minimum width=7mm, rotate=90] at (1.2,1.2) {};
\node[single arrow, draw=black, single arrow head extend=4pt, minimum height=5.5mm, minimum width=7mm, rotate=0] at (1.2,2.0) {};
\node[single arrow, draw=black, single arrow head extend=4pt, minimum height=5.5mm, minimum width=7mm, rotate=270] at (1.2,2.8) {};
\node[single arrow, draw=black, single arrow head extend=4pt, minimum height=5.5mm, minimum width=7mm, rotate=0] at (1.2,3.6) {};
\node[single arrow, draw=black, single arrow head extend=4pt, minimum height=5.5mm, minimum width=7mm, rotate=270] at (1.2,4.4) {};
\node[single arrow, draw=black, single arrow head extend=4pt, minimum height=5.5mm, minimum width=7mm, rotate=270] at (1.2,5.2) {};

\node[single arrow, draw=black, single arrow head extend=4pt, minimum height=5.5mm, minimum width=7mm, rotate=90] at (2.0,0.4) {};
\node[single arrow, draw=black, single arrow head extend=4pt, minimum height=5.5mm, minimum width=7mm, rotate=90] at (2.0,1.2) {};
\node[single arrow, draw=black, single arrow head extend=4pt, minimum height=5.5mm, minimum width=7mm, rotate=0] at (2.0,2.0) {};
\node[single arrow, draw=black, single arrow head extend=4pt, minimum height=5.5mm, minimum width=7mm, rotate=0] at (2.0,2.8) {};
\node[single arrow, draw=black, single arrow head extend=4pt, minimum height=5.5mm, minimum width=7mm, rotate=0] at (2.0,3.6) {};
\node[single arrow, draw=black, single arrow head extend=4pt, minimum height=5.5mm, minimum width=7mm, rotate=270] at (2.0,4.4) {};
\node[single arrow, draw=black, single arrow head extend=4pt, minimum height=5.5mm, minimum width=7mm, rotate=270] at (2.0,5.2) {};

\node[single arrow, draw=black, single arrow head extend=4pt, minimum height=5.5mm, minimum width=7mm, rotate=90] at (2.8,0.4) {};
\node[single arrow, draw=black, single arrow head extend=4pt, minimum height=5.5mm, minimum width=7mm, rotate=90] at (2.8,1.2) {};
\node[single arrow, draw=black, single arrow head extend=4pt, minimum height=5.5mm, minimum width=7mm, rotate=0] at (2.8,2.0) {};
\node[single arrow, draw=black, single arrow head extend=4pt, minimum height=5.5mm, minimum width=7mm, rotate=270] at (2.8,2.8) {};
\node[single arrow, draw=black, single arrow head extend=4pt, minimum height=5.5mm, minimum width=7mm, rotate=270] at (2.8,3.6) {};
\node[single arrow, draw=black, single arrow head extend=4pt, minimum height=5.5mm, minimum width=7mm, rotate=0] at (2.8,4.4) {};
\node[single arrow, draw=black, single arrow head extend=4pt, minimum height=5.5mm, minimum width=7mm, rotate=0] at (2.8,5.2) {};

\node[single arrow, draw=black, single arrow head extend=4pt, minimum height=5.5mm, minimum width=7mm, rotate=90] at (3.6,0.4) {};
\node[single arrow, draw=black, single arrow head extend=4pt, minimum height=5.5mm, minimum width=7mm, rotate=90] at (3.6,1.2) {};
\node[single arrow, draw=black, single arrow head extend=4pt, minimum height=5.5mm, minimum width=7mm, rotate=0] at (3.6,2.0) {};
\node[single arrow, draw=black, single arrow head extend=4pt, minimum height=5.5mm, minimum width=7mm, rotate=270] at (3.6,2.8) {};
\node[single arrow, draw=black, single arrow head extend=4pt, minimum height=5.5mm, minimum width=7mm, rotate=270] at (3.6,3.6) {};
\node[single arrow, draw=black, single arrow head extend=4pt, minimum height=5.5mm, minimum width=7mm, rotate=270] at (3.6,4.4) {};
\node[single arrow, draw=black, single arrow head extend=4pt, minimum height=5.5mm, minimum width=7mm, rotate=0] at (3.6,5.2) {};

\node[single arrow, draw=black, single arrow head extend=4pt, minimum height=5.5mm, minimum width=7mm, rotate=0] at (4.4,0.4) {};
\node[single arrow, draw=black, single arrow head extend=4pt, minimum height=5.5mm, minimum width=7mm, rotate=0] at (4.4,1.2) {};
\node[single arrow, draw=black, single arrow head extend=4pt, minimum height=5.5mm, minimum width=7mm, rotate=0] at (4.4,2.0) {};
\node[single arrow, draw=black, single arrow head extend=4pt, minimum height=5.5mm, minimum width=7mm, rotate=0] at (4.4,2.8) {};
\node[single arrow, draw=black, single arrow head extend=4pt, minimum height=5.5mm, minimum width=7mm, rotate=0] at (4.4,3.6) {};
\node[single arrow, draw=black, single arrow head extend=4pt, minimum height=5.5mm, minimum width=7mm, rotate=0] at (4.4,4.4) {};
\node[single arrow, draw=black, single arrow head extend=4pt, minimum height=5.5mm, minimum width=7mm, rotate=0] at (4.4,5.2) {};

\node[single arrow, draw=black, single arrow head extend=4pt, minimum height=5.5mm, minimum width=7mm, rotate=0] at (5.2,0.4) {};
\node[single arrow, draw=black, single arrow head extend=4pt, minimum height=5.5mm, minimum width=7mm, rotate=0] at (5.2,1.2) {};
\node[single arrow, draw=black, single arrow head extend=4pt, minimum height=5.5mm, minimum width=7mm, rotate=0] at (5.2,2.0) {};
\node[single arrow, draw=black, single arrow head extend=4pt, minimum height=5.5mm, minimum width=7mm, rotate=180] at (5.2,2.8) {};
\node[single arrow, draw=black, single arrow head extend=4pt, minimum height=5.5mm, minimum width=7mm, rotate=180] at (5.2,3.6) {};
\node[single arrow, draw=black, single arrow head extend=4pt, minimum height=5.5mm, minimum width=7mm, rotate=180] at (5.2,4.4) {};
\node[single arrow, draw=black, single arrow head extend=4pt, minimum height=5.5mm, minimum width=7mm, rotate=180] at (5.2,5.2) {};

\node[single arrow, draw=black, single arrow head extend=4pt, minimum height=5.5mm, minimum width=7mm, rotate=0] at (6.0,0.4) {};
\node[single arrow, draw=black, single arrow head extend=4pt, minimum height=5.5mm, minimum width=7mm, rotate=0] at (6.0,1.2) {};
\node[single arrow, draw=black, single arrow head extend=4pt, minimum height=5.5mm, minimum width=7mm, rotate=0] at (6.0,2.0) {};
\node[single arrow, draw=black, single arrow head extend=4pt, minimum height=5.5mm, minimum width=7mm, rotate=0] at (6.0,2.8) {};
\node[single arrow, draw=black, single arrow head extend=4pt, minimum height=5.5mm, minimum width=7mm, rotate=0] at (6.0,3.6) {};
\node[single arrow, draw=black, single arrow head extend=4pt, minimum height=5.5mm, minimum width=7mm, rotate=0] at (6.0,4.4) {};
\node[single arrow, draw=black, single arrow head extend=4pt, minimum height=5.5mm, minimum width=7mm, rotate=0] at (6.0,5.2) {};

\node[single arrow, draw=black, single arrow head extend=4pt, minimum height=5.5mm, minimum width=7mm, rotate=0] at (6.8,0.4) {};
\node[single arrow, draw=black, single arrow head extend=4pt, minimum height=5.5mm, minimum width=7mm, rotate=0] at (6.8,1.2) {};
\node[single arrow, draw=black, single arrow head extend=4pt, minimum height=5.5mm, minimum width=7mm, rotate=0] at (6.8,2.0) {};
\node[single arrow, draw=black, single arrow head extend=4pt, minimum height=5.5mm, minimum width=7mm, rotate=0] at (6.8,2.8) {};
\node[single arrow, draw=black, single arrow head extend=4pt, minimum height=5.5mm, minimum width=7mm, rotate=0] at (6.8,3.6) {};
\node[single arrow, draw=black, single arrow head extend=4pt, minimum height=5.5mm, minimum width=7mm, rotate=0] at (6.8,4.4) {};
\node[single arrow, draw=black, single arrow head extend=4pt, minimum height=5.5mm, minimum width=7mm, rotate=0] at (6.8,5.2) {};

\node[single arrow, draw=black, single arrow head extend=4pt, minimum height=5.5mm, minimum width=7mm, rotate=180] at (7.6,0.4) {};
\node[single arrow, draw=black, single arrow head extend=4pt, minimum height=5.5mm, minimum width=7mm, rotate=180] at (7.6,1.2) {};
\node[single arrow, draw=black, single arrow head extend=4pt, minimum height=5.5mm, minimum width=7mm, rotate=180] at (7.6,2.0) {};
\node[single arrow, draw=black, single arrow head extend=4pt, minimum height=5.5mm, minimum width=7mm, rotate=180] at (7.6,2.8) {};
\node[single arrow, draw=black, single arrow head extend=4pt, minimum height=5.5mm, minimum width=7mm, rotate=180] at (7.6,3.6) {};
\node[single arrow, draw=black, single arrow head extend=4pt, minimum height=5.5mm, minimum width=7mm, rotate=180] at (7.6,4.4) {};
\node[single arrow, draw=black, single arrow head extend=4pt, minimum height=5.5mm, minimum width=7mm, rotate=180] at (7.6,5.2) {};

\node at (0.4,0.4) {};
\node at (0.4,1.2) {2};
\node at (0.4,2.0) {4};
\node at (0.4,2.8) {4};
\node at (0.4,3.6) {};
\node at (0.4,4.4) {4};
\node at (0.4,5.2) {};

\node at (1.2,0.4) {};
\node at (1.2,1.2) {2};
\node at (1.2,2.0) {4};
\node at (1.2,2.8) {};
\node at (1.2,3.6) {4};
\node at (1.2,4.4) {};
\node at (1.2,5.2) {};

\node at (2.0,0.4) {};
\node at (2.0,1.2) {2};
\node at (2.0,2.0) {4};
\node at (2.0,2.8) {4};
\node at (2.0,3.6) {4};
\node at (2.0,4.4) {};
\node at (2.0,5.2) {};

\node at (2.8,0.4) {};
\node at (2.8,1.2) {2};
\node at (2.8,2.0) {4};
\node at (2.8,2.8) {};
\node at (2.8,3.6) {};
\node at (2.8,4.4) {4};
\node at (2.8,5.2) {4};

\node at (3.6,0.4) {};
\node at (3.6,1.2) {2};
\node at (3.6,2.0) {4};
\node at (3.6,2.8) {};
\node at (3.6,3.6) {};
\node at (3.6,4.4) {};
\node at (3.6,5.2) {4};

\node at (4.4,0.4) {4};
\node at (4.4,1.2) {4};
\node at (4.4,2.0) {4};
\node at (4.4,2.8) {4};
\node at (4.4,3.6) {4};
\node at (4.4,4.4) {4};
\node at (4.4,5.2) {4};

\node at (5.2,0.4) {3};
\node at (5.2,1.2) {3};
\node at (5.2,2.0) {3};
\node at (5.2,2.8) {3};
\node at (5.2,3.6) {3};
\node at (5.2,4.4) {3};
\node at (5.2,5.2) {3};

\node at (6.0,0.4) {2};
\node at (6.0,1.2) {2};
\node at (6.0,2.0) {2};
\node at (6.0,2.8) {2};
\node at (6.0,3.6) {2};
\node at (6.0,4.4) {2};
\node at (6.0,5.2) {2};

\node at (6.8,0.4) {1};
\node at (6.8,1.2) {1};
\node at (6.8,2.0) {1};
\node at (6.8,2.8) {1};
\node at (6.8,3.6) {1};
\node at (6.8,4.4) {1};
\node at (6.8,5.2) {1};

\node at (7.6,0.4) {4};
\node at (7.6,1.2) {4};
\node at (7.6,2.0) {4};
\node at (7.6,2.8) {4};
\node at (7.6,3.6) {4};
\node at (7.6,4.4) {4};
\node at (7.6,5.2) {4};
\end{tikzpicture}
\caption{A $7 \times 10$ Zeiger puzzle grid transformed from the NAE3SAT+ instance in Fig.~\ref{fig2}}
\label{fig3}
\end{figure}

Clearly, this transformation can be done in polynomial time.

Consider each of the first $n$ columns of $G$. We disregard constraints on right arrows in this column for now, and consider only the constraints on up and down arrows in this column. Next, we will prove that there are only two ways to fill numbers into this column to satisfy them: filling either all 2s or all 3s.

\begin{lemma} \label{lemA}
For each $q \leq n$, there are exactly two ways to fill numbers into all unnumbered cells in the $q$-th column of $G$ to satisfy constraints of all up and down arrows in that column, which are filling all 2s and filling all 3s.
\end{lemma}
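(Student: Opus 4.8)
The plan is to fix a single column index $q \le n$ and analyze the up and down arrows of that column in isolation, treating the pre-printed numbers as fixed anchors and the unnumbered cells as unknowns. First I would record the layout of the $q$-th column from top to bottom: the clause rows in which $x_q$ occurs carry down arrows with unknown values, which I call $d_1,\dots,d_t$ ordered from top to bottom (where $t\ge 1$ since every variable occurs in some clause); every other cell in rows $1,\dots,m$ as well as $a(m+1,q)$ carries the fixed number $4$; the cell $a(m+2,q)$ carries the fixed number $2$ under an up arrow; and the bottom cell $a(m+3,q)$ carries an up arrow with unknown value $b$. I would also fix the counting convention that an arrow counts the distinct integers strictly in the cells lying in its direction, not in its own cell. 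The single most useful structural remark is that every down arrow sits strictly above both $a(m+1,q)$ and $a(m+2,q)$, so the cells below any down arrow always contain a $4$ and a $2$; likewise the cells above $a(m+2,q)$ are exactly the $4$s and the $d_i$, and the cells above $a(m+3,q)$ are those together with the $2$.

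The core of the argument is to let the two up arrows pin down the global shape of any valid filling. The up arrow at $a(m+2,q)$ sees the value set $\{4\}\cup\{d_1,\dots,d_t\}$, and its constraint forces this set to have size exactly $2$; since it cannot be $\{4\}$ alone, there is a unique value $v\neq 4$ with every $d_i\in\{4,v\}$. The up arrow at $a(m+3,q)$ sees $\{2,4,v\}$, so $b=|\{2,4,v\}|$, which is $2$ when $v=2$ and $3$ otherwise. To nail down $v$, I would invoke the \emph{bottommost} down arrow $d_t$: the cells below it are precisely some $4$s together with the $2$ and the bottom value $b$, so $d_t=|\{4,2,b\}|$ while also $d_t\in\{4,v\}$; a short case check then forces $v=b\in\{2,3\}$.

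Finally I would propagate this back to every down arrow. For an arbitrary down arrow, the cells below it are $\{4,2,b\}$ together with some of the $d_j$, all of which lie in $\{4,v\}$; since $v=b\in\{2,3\}$ this value set collapses to $\{4,2\}$ (size $2$) when $v=2$ and to $\{4,2,3\}$ (size $3$) when $v=3$, forcing $d_i=v$ for all $i$. Hence any filling satisfying the up and down constraints is either all $2$s or all $3$s, and I would close by verifying the converse directly, namely that both fillings do satisfy every up and down arrow constraint. The part needing the most care is the soundness direction, i.e.\ ruling out any exotic value of $v$, together with the bookkeeping of counting direction and self-exclusion; the whole argument hinges on the two permanent anchors (a $4$ in row $m+1$ and a $2$ in row $m+2$) that lie below every down arrow and above the two bottom up arrows.
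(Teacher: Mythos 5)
Your proof is correct, and it is built from exactly the same ingredients as the paper's: the two anchors (the fixed 4 in row $m+1$ and the fixed 2 in row $m+2$), the up-arrow constraint at $a(m+2,q)$, the unnumbered bottom cell $a(m+3,q)$, and the bottommost down arrow. The difference is purely in the scaffolding. The paper starts at the bottommost down arrow, observes that the cells below it force its value into $\{2,3\}$, splits into those two cases, and within each case uses the up arrow at $a(m+2,q)$ to confine all cells above to $\{2,4\}$ (resp.\ $\{3,4\}$), then determines $a(m+3,q)$ and sweeps upward through the down arrows one at a time (``second bottommost, third bottommost, and so on''). You invert this order: the up arrow at $a(m+2,q)$ comes first and yields the uniform fact that every down-arrow value lies in $\{4,v\}$ for a single unknown $v\neq 4$; the bottom two cells then pin down $v=b\in\{2,3\}$; and the conclusion follows by one computation valid for all down arrows at once, since the value set below any down arrow is fully determined by $v$ and $b$. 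Your parametrization by $v$ replaces the paper's two symmetric cases and its bottom-to-top iteration with a single uniform step, which is a slightly cleaner organization (it also makes explicit, via the case check on $d_t$, why no ``exotic'' value of $v$ can occur --- a point the paper handles implicitly inside its case split). Substantively, though, the two arguments are the same elementary column analysis, and both close with the same direct verification of the converse.
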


\begin{proof}
Let $a(p,q)$ be the bottommost down arrow cell in the $q$-th column of $G$ (there must be at least one down arrow since every variable appears in at least one clause). Observe that each cell below $a(p,q)$ contains either a 2 or 4 (and there is at least one 2 and at least one 4), except the cell $a(m+3,q)$ which currently contains no number, so $a(p,q)$ must contain either a 2 or 3.

\textbf{Case 1:} $a(p,q)$ contains a 2.

Since $a(m+2,q)$ contains a 2, all cells above it must contain either a 2 or 4 (and there is at least one 2 and at least one 4), which implies $a(m+3,q)$ also contains a 2. Consider the second bottommost down arrow in this column. All cells below it contains either a 2 or 4 (and there is at least one 2 and at least one 4), so it must contain a 2. We then consider in the same way the third bottommost down arrow, the fourth bottommost down arrow, and so on. Finally, we can conclude that every down arrow in this column must contain a 2.

\textbf{Case 2:} $a(p,q)$ contains a 3.

Since $a(m+2,q)$ contains a 2, all cells above it must contain either a 3 or 4 (and there is at least one 3 and at least one 4), which implies $a(m+3,q)$ contains a 3. Consider the second bottommost down arrow in this column. All cells below it contains either a 2, 3, or 4 (and there is at least one number of each type), so it must contain a 3. We then consider in the same way the third bottommost down arrow, the fourth bottommost down arrow, and so on. Finally, we can conclude that every down arrow in this column must contain a 3.

On the other hand, filling all 2s or all 3s clearly satisfies the constraints of all up and down arrows in the $q$-th column, so the converse is also true.
\end{proof}

Now consider each $p$-th row of $G$ ($p \leq m$). Since $C_p$ contains exactly three literals, there are exactly three unnumbered down arrow cells in the $p$-th row. We will prove that they must be filled with at least one 2 and at least one 3.

\begin{lemma} \label{lemB}
In any solution of $G$, the three unnumbered down arrow cells in each $p$-th row ($p \leq m$) must be filled with at least one 2 and at least one 3.
\end{lemma}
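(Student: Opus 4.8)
The plan is to read off the claim from a single horizontal arrow, namely the left arrow carrying the number $3$ in cell $a(p,n+2)$. Since that arrow points left, the cells it ``sees'' are exactly $a(p,1),a(p,2),\ldots,a(p,n+1)$, the cells strictly to its left in row $p$ (matching the convention, visible in Fig.~\ref{fig1} and the right arrow labelled $1$ in column $n+4$, that a cell's own value is not counted). Hence in any solution the number of distinct integers appearing in columns $1$ through $n+1$ of row $p$ must equal $3$.

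Next I would classify those entries using Lemma~\ref{lemA}. In any solution of $G$ the up and down arrow constraints in every variable column hold, so Lemma~\ref{lemA} applies and each of the first $n$ columns is filled entirely with $2$s or entirely with $3$s. Consequently, for $q\le n$ the cell $a(p,q)$ holds a $2$ or a $3$ when $x_q$ appears in $C_p$ (it is then one of the three down arrow cells), and holds the fixed number $4$ when $x_q$ does not appear in $C_p$ (it is a right arrow labelled $4$). The cell $a(p,n+1)$ holds the fixed number $4$ as well. Thus the multiset of values in columns $1$ through $n+1$ consists of at least one $4$ (guaranteed by column $n+1$) together with the three values of the down arrow cells, each lying in $\{2,3\}$.

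Finally I would count distinct values. The value $4$ is always present, so the distinct set equals $\{4\}$ together with whatever occurs among the three clause cells. This set has size $3$ if and only if both $2$ and $3$ occur among those three cells: if all three were $2$ the distinct set would be $\{2,4\}$, and if all three were $3$ it would be $\{3,4\}$, each of size $2$, contradicting the label $3$ on $a(p,n+2)$. Therefore the three down arrow cells in row $p$ must contain at least one $2$ and at least one $3$, which is exactly the assertion.

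The argument is short because Lemma~\ref{lemA} does the structural work; the only point requiring care is the bookkeeping of precisely which cells $a(p,n+2)$ sees, together with the observation that a $4$ is forced to appear among them (so that a distinct-count of $3$ genuinely compels both a $2$ and a $3$ rather than being attainable some other way). I expect this to be the only place a reader might stumble, rather than a genuine mathematical obstacle.
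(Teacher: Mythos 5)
Your proof is correct and follows essentially the same route as the paper's: invoke Lemma~\ref{lemA} to force each of the three down arrow cells in row $p$ to hold a 2 or a 3, then use the left arrow labelled 3 at $a(p,n+2)$ together with the guaranteed 4 at $a(p,n+1)$ (all other cells to the left being 4s) to conclude that both a 2 and a 3 must appear. Your version just spells out the distinct-count case analysis slightly more explicitly than the paper does.
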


\begin{proof}
Consider the three unnumbered down arrow cells in the $p$-th row. From Lemma \ref{lemA}, each of them must be filled with either a 2 or 3. Also, as $a(p,n+2)$ contains a left arrow with a number 3, exactly three different numbers must appear among the cells to the left of $a(p,n+2)$. Besides these three cells, all other such cells contain all 4s (and there is at least one 4 in $a(p,n+1)$). Therefore, these three cells must contain at least one 2 and at least one 3.
\end{proof}

Finally, we will prove the following theorem, which implies NP-hardness of the puzzle.

\begin{theorem} \label{thm1}
$S$ has a solution if and only if $G$ has a solution.
\end{theorem}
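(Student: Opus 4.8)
The plan is to prove the biconditional by exhibiting the correspondence between a Boolean assignment of $S$ and a filling of $G$ under the dictionary ``a column of $2$s means TRUE'' and ``a column of $3$s means FALSE,'' letting Lemmas~\ref{lemA} and~\ref{lemB} carry the structural content. A preliminary observation I would record first is that in any filling produced by Lemma~\ref{lemA} every unnumbered cell receives a $2$ or a $3$, and since all preprinted numbers lie in $\{1,2,3,4\}$, every cell of $G$ holds a value in $\{1,2,3,4\}$. This bound is what makes the ``count exactly four distinct values'' constraints tractable, so I would establish it up front.

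For the ($\Leftarrow$) direction I would assume $G$ has a solution and read an assignment off of it. By Lemma~\ref{lemA} the $q$-th column ($q \le n$) is uniformly $2$ or uniformly $3$; set $x_q$ to TRUE in the former case and FALSE in the latter. The down-arrow cell of variable $x_q$ inside clause row $p$ is one of these column entries, so its value records the truth value of $x_q$. Lemma~\ref{lemB} guarantees that the three down-arrow cells of each row $p \le m$ carry at least one $2$ and at least one $3$, which says exactly that clause $C_p$ contains at least one TRUE and at least one FALSE literal. Thus the assignment is a valid NAE solution of $S$, and this direction is essentially immediate given the two lemmas.

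For the ($\Rightarrow$) direction I would start from an NAE-satisfying assignment and build the canonical filling: put a $2$ in every unnumbered cell of column $q$ if $x_q$ is TRUE and a $3$ if $x_q$ is FALSE. The converse half of Lemma~\ref{lemA} discharges all up- and down-arrow constraints in the variable columns, so what remains is to verify the horizontal arrows together with the vertical arrows of the five padding columns. I expect the bulk of the routine work---and the only place the argument could go wrong---to be this verification, so I would organize it around two points. First, columns $n+1,\dots,n+5$ act as a self-contained ``counter'': along every row they display the values $4,3,2,1,4$ (the cell $a(p,n+2)$ holds $3$ for all $p$, whether its arrow points left or right), so each right-$4$, right-$3$, right-$2$, right-$1$, and left-$4$ arrow automatically sees its intended number of distinct values, and likewise every value-$4$ right arrow in a variable column sees all of $\{1,2,3,4\}$ to its right and, by the above bound, nothing more. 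These checks are independent of the assignment. The single assignment-dependent constraint is the left-arrow-with-$3$ cell $a(p,n+2)$ for $p \le m$: to its left lie the three variable cells of $C_p$ (values in $\{2,3\}$) together with $4$s, so it sees exactly three distinct values iff both a $2$ and a $3$ occur among those variable cells---precisely the NAE condition, which holds by hypothesis. The main obstacle is therefore not conceptual but bookkeeping: confirming case by case that each padding and value-$4$ constraint counts to its intended target and that no count can exceed $4$, for which the uniform-value bound derived from Lemma~\ref{lemA} is the key lever. Assembling these verifications shows the canonical filling is a genuine solution of $G$, completing the equivalence.
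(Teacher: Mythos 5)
Your proposal is correct and follows essentially the same route as the paper's proof: the same dictionary ($2=$ TRUE, $3=$ FALSE), with Lemma~\ref{lemA} handling the variable columns and Lemma~\ref{lemB} (respectively the NAE hypothesis) supplying the clause-row condition in each direction. The only difference is that you spell out the padding-column and value-$4$ arrow checks that the paper dismisses as ``easy to verify,'' and your case analysis there is accurate.
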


\begin{proof}
Suppose $S$ has a solution $T$. From $T$, we construct a solution $H$ of $G$ as follows: for each $q \leq n$, we fill every unnumbered cell in the $q$-th column of $G$ with a 2 if $x_q$ is assigned to TRUE in $T$, and with a 3 otherwise (see Fig. \ref{fig4}).

From Lemma \ref{lemA}, the filled numbers satisfy constraints on all up and down arrows in the first $n$ columns of $G$. Also, since each clause of $S$ has at least one true literal and at least one false literal, the three unnumbered cells in each $p$-th row of $G$ ($p \leq m$) are filled with at least one 2 and at least one 3, which satisfies the constraint on the arrow in $a(p,n+2)$. It is easy to verify that constraints on other arrows in $G$ are also satisfied. Hence, we can conclude that $H$ is a valid solution of $G$.

On the other hand, suppose $G$ has a solution $H'$. From $H'$, we construct a solution $T'$ of $S$ as follows: for each $q \leq n$, we assign $x_q$ to TRUE if $a(m+3,q)$ contains a 2 in $H'$, and to FALSE otherwise.

From Lemma \ref{lemA}, all unnumbered cells in each $q$-th column of $G$ ($q \leq n$) must be filled with the same number as the one in $a(m+3,q)$. From Lemma \ref{lemB}, the three unnumbered cells in each $p$-th row of $G$ ($p \leq m$) must be filled with at least one 2 and at least one 3, which means the three literals in $C_p$ of $S$ must contain at least one true literal and at least one false literal. Since this holds for every $p \leq m$, we can conclude that $T'$ is a valid solution of $S$.
\end{proof}

\begin{figure}
\centering
\begin{tikzpicture}
\draw[step=0.8cm, line width=0.3mm] (0,0) grid (8.0,5.6);

\node[single arrow, draw=black, single arrow head extend=4pt, minimum height=5.5mm, minimum width=7mm, rotate=90] at (0.4,0.4) {};
\node[single arrow, draw=black, single arrow head extend=4pt, minimum height=5.5mm, minimum width=7mm, rotate=90] at (0.4,1.2) {};
\node[single arrow, draw=black, single arrow head extend=4pt, minimum height=5.5mm, minimum width=7mm, rotate=0] at (0.4,2.0) {};
\node[single arrow, draw=black, single arrow head extend=4pt, minimum height=5.5mm, minimum width=7mm, rotate=0] at (0.4,2.8) {};
\node[single arrow, draw=black, single arrow head extend=4pt, minimum height=5.5mm, minimum width=7mm, rotate=270] at (0.4,3.6) {};
\node[single arrow, draw=black, single arrow head extend=4pt, minimum height=5.5mm, minimum width=7mm, rotate=0] at (0.4,4.4) {};
\node[single arrow, draw=black, single arrow head extend=4pt, minimum height=5.5mm, minimum width=7mm, rotate=270] at (0.4,5.2) {};

\node[single arrow, draw=black, single arrow head extend=4pt, minimum height=5.5mm, minimum width=7mm, rotate=90] at (1.2,0.4) {};
\node[single arrow, draw=black, single arrow head extend=4pt, minimum height=5.5mm, minimum width=7mm, rotate=90] at (1.2,1.2) {};
\node[single arrow, draw=black, single arrow head extend=4pt, minimum height=5.5mm, minimum width=7mm, rotate=0] at (1.2,2.0) {};
\node[single arrow, draw=black, single arrow head extend=4pt, minimum height=5.5mm, minimum width=7mm, rotate=270] at (1.2,2.8) {};
\node[single arrow, draw=black, single arrow head extend=4pt, minimum height=5.5mm, minimum width=7mm, rotate=0] at (1.2,3.6) {};
\node[single arrow, draw=black, single arrow head extend=4pt, minimum height=5.5mm, minimum width=7mm, rotate=270] at (1.2,4.4) {};
\node[single arrow, draw=black, single arrow head extend=4pt, minimum height=5.5mm, minimum width=7mm, rotate=270] at (1.2,5.2) {};

\node[single arrow, draw=black, single arrow head extend=4pt, minimum height=5.5mm, minimum width=7mm, rotate=90] at (2.0,0.4) {};
\node[single arrow, draw=black, single arrow head extend=4pt, minimum height=5.5mm, minimum width=7mm, rotate=90] at (2.0,1.2) {};
\node[single arrow, draw=black, single arrow head extend=4pt, minimum height=5.5mm, minimum width=7mm, rotate=0] at (2.0,2.0) {};
\node[single arrow, draw=black, single arrow head extend=4pt, minimum height=5.5mm, minimum width=7mm, rotate=0] at (2.0,2.8) {};
\node[single arrow, draw=black, single arrow head extend=4pt, minimum height=5.5mm, minimum width=7mm, rotate=0] at (2.0,3.6) {};
\node[single arrow, draw=black, single arrow head extend=4pt, minimum height=5.5mm, minimum width=7mm, rotate=270] at (2.0,4.4) {};
\node[single arrow, draw=black, single arrow head extend=4pt, minimum height=5.5mm, minimum width=7mm, rotate=270] at (2.0,5.2) {};

\node[single arrow, draw=black, single arrow head extend=4pt, minimum height=5.5mm, minimum width=7mm, rotate=90] at (2.8,0.4) {};
\node[single arrow, draw=black, single arrow head extend=4pt, minimum height=5.5mm, minimum width=7mm, rotate=90] at (2.8,1.2) {};
\node[single arrow, draw=black, single arrow head extend=4pt, minimum height=5.5mm, minimum width=7mm, rotate=0] at (2.8,2.0) {};
\node[single arrow, draw=black, single arrow head extend=4pt, minimum height=5.5mm, minimum width=7mm, rotate=270] at (2.8,2.8) {};
\node[single arrow, draw=black, single arrow head extend=4pt, minimum height=5.5mm, minimum width=7mm, rotate=270] at (2.8,3.6) {};
\node[single arrow, draw=black, single arrow head extend=4pt, minimum height=5.5mm, minimum width=7mm, rotate=0] at (2.8,4.4) {};
\node[single arrow, draw=black, single arrow head extend=4pt, minimum height=5.5mm, minimum width=7mm, rotate=0] at (2.8,5.2) {};

\node[single arrow, draw=black, single arrow head extend=4pt, minimum height=5.5mm, minimum width=7mm, rotate=90] at (3.6,0.4) {};
\node[single arrow, draw=black, single arrow head extend=4pt, minimum height=5.5mm, minimum width=7mm, rotate=90] at (3.6,1.2) {};
\node[single arrow, draw=black, single arrow head extend=4pt, minimum height=5.5mm, minimum width=7mm, rotate=0] at (3.6,2.0) {};
\node[single arrow, draw=black, single arrow head extend=4pt, minimum height=5.5mm, minimum width=7mm, rotate=270] at (3.6,2.8) {};
\node[single arrow, draw=black, single arrow head extend=4pt, minimum height=5.5mm, minimum width=7mm, rotate=270] at (3.6,3.6) {};
\node[single arrow, draw=black, single arrow head extend=4pt, minimum height=5.5mm, minimum width=7mm, rotate=270] at (3.6,4.4) {};
\node[single arrow, draw=black, single arrow head extend=4pt, minimum height=5.5mm, minimum width=7mm, rotate=0] at (3.6,5.2) {};

\node[single arrow, draw=black, single arrow head extend=4pt, minimum height=5.5mm, minimum width=7mm, rotate=0] at (4.4,0.4) {};
\node[single arrow, draw=black, single arrow head extend=4pt, minimum height=5.5mm, minimum width=7mm, rotate=0] at (4.4,1.2) {};
\node[single arrow, draw=black, single arrow head extend=4pt, minimum height=5.5mm, minimum width=7mm, rotate=0] at (4.4,2.0) {};
\node[single arrow, draw=black, single arrow head extend=4pt, minimum height=5.5mm, minimum width=7mm, rotate=0] at (4.4,2.8) {};
\node[single arrow, draw=black, single arrow head extend=4pt, minimum height=5.5mm, minimum width=7mm, rotate=0] at (4.4,3.6) {};
\node[single arrow, draw=black, single arrow head extend=4pt, minimum height=5.5mm, minimum width=7mm, rotate=0] at (4.4,4.4) {};
\node[single arrow, draw=black, single arrow head extend=4pt, minimum height=5.5mm, minimum width=7mm, rotate=0] at (4.4,5.2) {};

\node[single arrow, draw=black, single arrow head extend=4pt, minimum height=5.5mm, minimum width=7mm, rotate=0] at (5.2,0.4) {};
\node[single arrow, draw=black, single arrow head extend=4pt, minimum height=5.5mm, minimum width=7mm, rotate=0] at (5.2,1.2) {};
\node[single arrow, draw=black, single arrow head extend=4pt, minimum height=5.5mm, minimum width=7mm, rotate=0] at (5.2,2.0) {};
\node[single arrow, draw=black, single arrow head extend=4pt, minimum height=5.5mm, minimum width=7mm, rotate=180] at (5.2,2.8) {};
\node[single arrow, draw=black, single arrow head extend=4pt, minimum height=5.5mm, minimum width=7mm, rotate=180] at (5.2,3.6) {};
\node[single arrow, draw=black, single arrow head extend=4pt, minimum height=5.5mm, minimum width=7mm, rotate=180] at (5.2,4.4) {};
\node[single arrow, draw=black, single arrow head extend=4pt, minimum height=5.5mm, minimum width=7mm, rotate=180] at (5.2,5.2) {};

\node[single arrow, draw=black, single arrow head extend=4pt, minimum height=5.5mm, minimum width=7mm, rotate=0] at (6.0,0.4) {};
\node[single arrow, draw=black, single arrow head extend=4pt, minimum height=5.5mm, minimum width=7mm, rotate=0] at (6.0,1.2) {};
\node[single arrow, draw=black, single arrow head extend=4pt, minimum height=5.5mm, minimum width=7mm, rotate=0] at (6.0,2.0) {};
\node[single arrow, draw=black, single arrow head extend=4pt, minimum height=5.5mm, minimum width=7mm, rotate=0] at (6.0,2.8) {};
\node[single arrow, draw=black, single arrow head extend=4pt, minimum height=5.5mm, minimum width=7mm, rotate=0] at (6.0,3.6) {};
\node[single arrow, draw=black, single arrow head extend=4pt, minimum height=5.5mm, minimum width=7mm, rotate=0] at (6.0,4.4) {};
\node[single arrow, draw=black, single arrow head extend=4pt, minimum height=5.5mm, minimum width=7mm, rotate=0] at (6.0,5.2) {};

\node[single arrow, draw=black, single arrow head extend=4pt, minimum height=5.5mm, minimum width=7mm, rotate=0] at (6.8,0.4) {};
\node[single arrow, draw=black, single arrow head extend=4pt, minimum height=5.5mm, minimum width=7mm, rotate=0] at (6.8,1.2) {};
\node[single arrow, draw=black, single arrow head extend=4pt, minimum height=5.5mm, minimum width=7mm, rotate=0] at (6.8,2.0) {};
\node[single arrow, draw=black, single arrow head extend=4pt, minimum height=5.5mm, minimum width=7mm, rotate=0] at (6.8,2.8) {};
\node[single arrow, draw=black, single arrow head extend=4pt, minimum height=5.5mm, minimum width=7mm, rotate=0] at (6.8,3.6) {};
\node[single arrow, draw=black, single arrow head extend=4pt, minimum height=5.5mm, minimum width=7mm, rotate=0] at (6.8,4.4) {};
\node[single arrow, draw=black, single arrow head extend=4pt, minimum height=5.5mm, minimum width=7mm, rotate=0] at (6.8,5.2) {};

\node[single arrow, draw=black, single arrow head extend=4pt, minimum height=5.5mm, minimum width=7mm, rotate=180] at (7.6,0.4) {};
\node[single arrow, draw=black, single arrow head extend=4pt, minimum height=5.5mm, minimum width=7mm, rotate=180] at (7.6,1.2) {};
\node[single arrow, draw=black, single arrow head extend=4pt, minimum height=5.5mm, minimum width=7mm, rotate=180] at (7.6,2.0) {};
\node[single arrow, draw=black, single arrow head extend=4pt, minimum height=5.5mm, minimum width=7mm, rotate=180] at (7.6,2.8) {};
\node[single arrow, draw=black, single arrow head extend=4pt, minimum height=5.5mm, minimum width=7mm, rotate=180] at (7.6,3.6) {};
\node[single arrow, draw=black, single arrow head extend=4pt, minimum height=5.5mm, minimum width=7mm, rotate=180] at (7.6,4.4) {};
\node[single arrow, draw=black, single arrow head extend=4pt, minimum height=5.5mm, minimum width=7mm, rotate=180] at (7.6,5.2) {};

\node at (0.4,0.4) {2};
\node at (0.4,1.2) {2};
\node at (0.4,2.0) {4};
\node at (0.4,2.8) {4};
\node at (0.4,3.6) {2};
\node at (0.4,4.4) {4};
\node at (0.4,5.2) {2};

\node at (1.2,0.4) {3};
\node at (1.2,1.2) {2};
\node at (1.2,2.0) {4};
\node at (1.2,2.8) {3};
\node at (1.2,3.6) {4};
\node at (1.2,4.4) {3};
\node at (1.2,5.2) {3};

\node at (2.0,0.4) {2};
\node at (2.0,1.2) {2};
\node at (2.0,2.0) {4};
\node at (2.0,2.8) {4};
\node at (2.0,3.6) {4};
\node at (2.0,4.4) {2};
\node at (2.0,5.2) {2};

\node at (2.8,0.4) {2};
\node at (2.8,1.2) {2};
\node at (2.8,2.0) {4};
\node at (2.8,2.8) {2};
\node at (2.8,3.6) {2};
\node at (2.8,4.4) {4};
\node at (2.8,5.2) {4};

\node at (3.6,0.4) {3};
\node at (3.6,1.2) {2};
\node at (3.6,2.0) {4};
\node at (3.6,2.8) {3};
\node at (3.6,3.6) {3};
\node at (3.6,4.4) {3};
\node at (3.6,5.2) {4};

\node at (4.4,0.4) {4};
\node at (4.4,1.2) {4};
\node at (4.4,2.0) {4};
\node at (4.4,2.8) {4};
\node at (4.4,3.6) {4};
\node at (4.4,4.4) {4};
\node at (4.4,5.2) {4};

\node at (5.2,0.4) {3};
\node at (5.2,1.2) {3};
\node at (5.2,2.0) {3};
\node at (5.2,2.8) {3};
\node at (5.2,3.6) {3};
\node at (5.2,4.4) {3};
\node at (5.2,5.2) {3};

\node at (6.0,0.4) {2};
\node at (6.0,1.2) {2};
\node at (6.0,2.0) {2};
\node at (6.0,2.8) {2};
\node at (6.0,3.6) {2};
\node at (6.0,4.4) {2};
\node at (6.0,5.2) {2};

\node at (6.8,0.4) {1};
\node at (6.8,1.2) {1};
\node at (6.8,2.0) {1};
\node at (6.8,2.8) {1};
\node at (6.8,3.6) {1};
\node at (6.8,4.4) {1};
\node at (6.8,5.2) {1};

\node at (7.6,0.4) {4};
\node at (7.6,1.2) {4};
\node at (7.6,2.0) {4};
\node at (7.6,2.8) {4};
\node at (7.6,3.6) {4};
\node at (7.6,4.4) {4};
\node at (7.6,5.2) {4};
\end{tikzpicture}
\caption{A solution of the Zeiger puzzle in Fig.~\ref{fig3} transformed from the NAE3SAT+ solution in Fig.~\ref{fig2}}
\label{fig4}
\end{figure}

From Theorem \ref{thm1}, we can conclude that deciding solvability of a given Zeiger puzzle is NP-complete.

\section{Physical ZKP Protocol for Zeiger}
Because of the NP-completeness of Zeiger, it is worth proposing a ZKP for it. In this section, we will construct a card-based physical ZKP protocol for the puzzle.

\subsection{Preliminaries}
\subsubsection{Cards}
Each card used in our protocol has a $\clubsuit$ or a $\heartsuit$ written on the front side. All cards have indistinguishable back sides denoted by \mybox{?}.

For $0 \leq x < q$, define $E_q^\clubsuit(x)$ to be a sequence of $q$ cards, all of them being \hbox{\mybox{$\heartsuit$}s} except the $(x+1)$-th leftmost card being a \mybox{$\clubsuit$}. For example, $E_4^\clubsuit(1)$ is \hbox{\mybox{$\heartsuit$}\mybox{$\clubsuit$}\mybox{$\heartsuit$}\mybox{$\heartsuit$}}. Analogously, define $E_q^\heartsuit(x)$ to be a sequence of $q$ cards, all of them being \hbox{\mybox{$\clubsuit$}s} except the $(x+1)$-th leftmost card being a \mybox{$\heartsuit$}. For example, $E_4^\heartsuit(1)$ is \hbox{\mybox{$\clubsuit$}\mybox{$\heartsuit$}\mybox{$\clubsuit$}\mybox{$\clubsuit$}}.

We may stack $E_q^\clubsuit(x)$ or $E_q^\heartsuit(x)$ into a single stack, with the leftmost card being the topmost card in the stack. Note that we refer to the ``sequence form'' and ``stack form'' of them interchangeably throughout the protocol.

Also, for $0 \leq x < q$, define $E_q(x)$ to be a sequence of $q$ two-card stacks, all of them being \hbox{$E_2^\clubsuit(0)$s} except the $(x+1)$-th leftmost stack being an $E_2^\clubsuit(1)$. For example, $E_4(1)$ is \hbox{\mybox{$\clubsuit$}\mybox{$\heartsuit$} \mybox{$\heartsuit$}\mybox{$\clubsuit$} \mybox{$\clubsuit$}\mybox{$\heartsuit$} \mybox{$\clubsuit$}\mybox{$\heartsuit$}} (when extracting cards in each stack into a sequence). Note that the top cards of all stacks of $E_q(x)$ form a sequence $E_q^\heartsuit(x)$, and the bottom cards of all stacks of $E_q(x)$ form a sequence $E_q^\clubsuit(x)$.

\subsubsection{Pile-Shifting Shuffle}
A \textit{pile-shifting shuffle} \cite[\S2.3]{polygon} rearranges all columns of a matrix of cards (or a matrix of stacks) by a uniformly random cyclic shift unknown to $V$. It can be implemented by putting all cards in each column into an envelope, and repeatedly picking some envelopes from the bottom and putting them on the top of the pile of envelopes.

\subsubsection{Pile-Scramble Shuffle}
A \textit{pile-scramble shuffle} \cite[\S3]{scramble} rearranges all columns of a matrix of cards (or a matrix of stacks) by a uniformly random permutation unknown to $V$. It can be implemented by putting all cards in each column into an envelope, and scrambling all envelopes together on a table.

\subsubsection{Copy Protocol}
Given a sequence $E_q(x)$, a \textit{copy protocol} creates an additional copy of the sequence without revealing $x$ to $V$. This protocol was developed by Shinagawa et al. \cite[\S3.1]{polygon} (using different encoding).

\begin{algorithm}[H] \label{sub1}
    \floatname{algorithm}{Subprotocol}
    \caption{Copy Protocol}
    \textbf{Input:} a face-down sequence $A = E_q(x)$ and $4q$ extra cards ($2q$ \mybox{$\clubsuit$}s and $2q$ \mybox{$\heartsuit$}s) \\
		\textbf{Output:} two face-down sequences $E_q(x)$ and $E_q(x)$ \\
    \textbf{Procedures:} $P$ performs the following steps.
    \begin{enumerate}
        \item Reverse the $q-1$ rightmost stacks of $A$, i.e. move the $(i+1)$-th leftmost stack to become the $i$-th rightmost stack for each $i=1,2,...,q-1$. The sequence $A$ now becomes $E_q(-x \mod q)$.
				\item Publicly construct two $E_q(0)$s from the $4q$ extra cards.
				\item Construct a $3 \times q$ matrix of two-card stacks $M$ by placing $A$ in Row 1 and the two $E_q(0)$s in Rows 2 and 3. Turn all cards face-down.
				\item Apply the pile-shifting shuffle to $M$.
				\item Turn over all cards in Row 1 of $M$. Shift the columns of $M$ cyclically such that the only $E_2^\clubsuit(1)$ in Row 1 moves to Column 1.
				\item Return the two sequences in Rows 2 and 3 of $M$.
    \end{enumerate}
\end{algorithm}

In Step 3, the sequence in Row 1 of $M$ is $E_q(-x \mod q)$, and the sequences in Rows 2 and 3 of $M$ are both $E_q(0)$s. Observe that the operations in Steps~4 and~5 only shift the columns of $M$ cyclically, so the numbers encoded by sequences in all rows always shift together by the same constant modulo $q$. In Step 6, the sequence in Row 1 becomes an $E_q(0)$, thus the sequences in Rows 2 and 3 must both be $E_q(x)$s.

Note that this protocol also checks that the sequence $A$ is in the correct format, i.e. consists of $q-1$ $E_2^\clubsuit(0)$s and one $E_2^\clubsuit(1)$. In Step 5, if the sequence in Row~1 does not consist of $q-1$ $E_2^\clubsuit(0)$s and one $E_2^\clubsuit(1)$, $V$ immediately rejects the verification.

\subsubsection{Set Size Protocol}
Given $p$ integers $x_1,x_2,...,x_p \in \{0,1,...,q-1\}$ unknown to $V$, a \textit{set size protocol} computes the number of different values among $x_1,x_2,...,x_p$, i.e. the size $\left|\{x_1,x_2,...,x_p\}\right|$, without revealing any $x_i$ to $V$.

This protocol was developed by Ruangwises et al. \cite[\S5.1]{equality2} (using different encoding), and uses a similar idea to the set union protocol of Doi et al. \cite[\S5.1.2]{setunion}.

\begin{algorithm}[H] \label{sub2}
    \floatname{algorithm}{Subprotocol}
    \caption{Set Size Protocol}
    \textbf{Input:} $p$ face-down sequences $E_q(x_1),E_q(x_2),...,E_q(x_p)$ \\
		\textbf{Output:} $q$ face-down sequences $E_2^\clubsuit(y_1),E_2^\clubsuit(y_2),...,E_2^\clubsuit(y_q)$ such that $y_1,y_2,...,y_q \in \{0,1\}$ and $y_1+y_2+...+y_q=|\{x_1,x_2,...,x_p\}|$ \\
    \textbf{Procedures:} $P$ performs the following steps.
    \begin{enumerate}
			\item Construct a $p \times q$ matrix of two-card stacks $M$ by placing $E_q(x_i)$ in Row $i$ of $M$ for each $i=1,2,...,p$. Let $M(i,j)$ denotes the stack at Row $i$ and Column $j$ of $M$.
			\item Perform the following steps for $i=2,3,...,p$.
			\begin{enumerate}
				\item Apply the pile-scramble shuffle to $M$.
				\item Turn over all cards in Row $i$ of $M$.
				\item For each $j=1,2,...,q$, if $M(i,j)$ is an $E_2^\clubsuit(1)$, swap $M(i,j)$ and $M(1,j)$.
				\item Turn all cards face-down.
			\end{enumerate}
			\item Apply the pile-scramble shuffle to $M$.
			\item Return the $q$ stacks in Row 1 of $M$ as $E_2^\clubsuit(y_1),E_2^\clubsuit(y_2),...,E_2^\clubsuit(y_q)$.
		\end{enumerate}
\end{algorithm}

Observe that in Step 2(c), we only replace $M(1,j)$ with an $E_2^\clubsuit(1)$. Therefore, once a stack in Row 1 becomes an $E_2^\clubsuit(1)$, it will remain an $E_2^\clubsuit(1)$ throughout the protocol.

For any $r \in \{1,2,...,q\}$, consider the stack $M(1,r)$ at the beginning (we consider this exact stack no matter which column it later moves to). If it is an $E_2^\clubsuit(1)$ at the beginning, i.e. $x_1=r-1$, it will remain an $E_2^\clubsuit(1)$ at the end. On the other hand, if it is an $E_2^\clubsuit(0)$ at the beginning, it will become an $E_2^\clubsuit(1)$ at the end if and only if it has been replaced by an $E_2^\clubsuit(1)$ from some Row $i \geq 2$ during the $i$-th iteration, which occurs when $x_i=r-1$.

Therefore, the stack $M(1,r)$ at the beginning is an $E_2^\clubsuit(1)$ at the end if and only if there is $i \in \{1,2,\ldots ,p\}$ such that $x_i=r-1$. This implies the number of $E_2^\clubsuit(1)$s in Row 1 at the end is equal to the number of different values among $x_1,x_2,...,x_p$.

\subsubsection{Summation Protocol}
Given $q$ integers $y_1,y_2,...,y_q \in \{0,1\}$ unknown to $V$, a~\textit{summation protocol} computes the sum $y_1+y_2+...+y_q$ without revealing any $y_i$ to $V$. This protocol was developed by Ruangwises and Itoh \cite[\S3.1]{equality}.

\begin{algorithm}[H] \label{sub3}
    \floatname{algorithm}{Subprotocol}
    \caption{Summation Protocol}
    \textbf{Input:} $q$ face-down sequences $E_2^\clubsuit(y_1),E_2^\clubsuit(y_2),...,E_2^\clubsuit(y_q)$ and two extra cards (a \mybox{$\clubsuit$} and a \mybox{$\heartsuit$}) \\
		\textbf{Output:} a face-down sequence $E_{q+1}^\clubsuit(y_1+y_2+...+y_q)$ \\
    \textbf{Procedures:} $P$ performs the following steps.
    \begin{enumerate}
        \item Let $A$ be the sequence $E_2^\clubsuit(y_1)$.
        \item Perform the following steps for $i=2,3,...,q$.
				\begin{enumerate}
					\item Append a \mybox{$\heartsuit$} to the right of $A$, making it an $E_{i+1}^\clubsuit(y_1+y_2+...+y_{i-1})$.
					\item Swap the two cards of $E_2^\clubsuit(y_i)$, making it $E_2^\heartsuit(y_i)$. Then, append $i-1$ \mybox{$\clubsuit$}s between the two cards, making it $E_{i+1}^\heartsuit(-y_i \mod (i+1))$. Call this sequence~$B$.
					\item Construct a $2 \times (i+1)$ matrix of cards $M$ by placing $A$ in Row 1 and $B$ in Row 2. Turn all cards face-down.
					\item Apply the pile-shifting shuffle to $M$.
					\item Turn over all cards in Row 2 of $M$. Shift the columns of $M$ cyclically such that the only \mybox{$\heartsuit$} in Row 2 moves to Column 1.
					\item Let $A$ be the sequence in Row 1 of $M$, to be used in the next iteration. Note that $A$ is $E_{i+1}^\clubsuit(y_1+y_2+...+y_i)$.
				\end{enumerate}
				\item Return the sequence $A$ from the final iteration.
    \end{enumerate}
\end{algorithm}

The purpose of the $i$-th iteration of Step 2 is to add $y_1+y_2+...+y_{i-1}$ and $y_i$. In Step 2(c), the sequence in Row 1 of $M$ is $E_{i+1}^\clubsuit(y_1+y_2+...+y_{i-1})$, and the sequence in Row 2 of $M$ is $E_{i+1}^\heartsuit(-y_i \mod (i+1))$. Observe that the operations in Steps 2(d) and 2(e) only shift the columns of $M$ cyclically, so the numbers encoded by sequences in both rows always shift together by the same constant modulo $q$. In Step 2(f), the sequence in Row 2 becomes $E_{i+1}^\heartsuit(0)$, thus the sequence in Row 1 must be $E_{i+1}^\clubsuit(y_1+y_2+...+y_i)$.

Moreover, at the end of the $i$-th iteration, we get $i+1$ extra cards ($i$ \mybox{$\clubsuit$}s and one \mybox{$\heartsuit$}) from the unused sequence in Row 2 of $M$ to use in Steps 2(a) and 2(b) in the next iteration without requiring additional cards.

After the $q$-th iteration, the sequence $A$ will become $E_{q+1}^\clubsuit(y_1+y_2+...+y_q)$.

\subsubsection{Comparing Protocol}
Given two integers $x_1,x_2 \in \{0,1,...,q-1\}$ unknown to $V$, a \textit{comparing protocol} lets $V$ verify that $x_1=x_2$ without revealing their value to $V$. This protocol was developed by Bultel et al. \cite[\S3.3]{makaro}.

\begin{algorithm}[H] \label{sub4}
    \floatname{algorithm}{Subprotocol}
    \caption{Comparing Protocol}
    \textbf{Input:} two face-down sequences $E_q^\clubsuit(x_1)$ and $E_q^\clubsuit(x_2)$ \\
		\textbf{Output:} $V$ rejects the verification if $x_1 \neq x_2$, or does nothing if $x_1=x_2$. \\
    \textbf{Procedures:} $P$ performs the following steps.
    \begin{enumerate}
        \item Construct a $2 \times q$ matrix of cards $M$ by placing $E_q^\clubsuit(x_1)$ and $E_q^\clubsuit(x_2)$ in Row 1 and Row 2 of $M$, respectively.
        \item Apply the pile-scramble shuffle to $M$.
				\item Turn over all cards in $M$. If the \mybox{$\clubsuit$} in Row 1 and the \mybox{$\clubsuit$} in Row 2 are not in the same column, $V$ rejects.
    \end{enumerate}
\end{algorithm}

Observe that we only rearrange the columns of $M$, so the \mybox{$\clubsuit$}s in both rows are in the same column at the end if and only if they were in the same column at the beginning, i.e. $x_1=x_2$.

\subsection{Our Main Protocol}
Suppose the puzzle grid has size $k \times \ell$. Let $b = \max\{k,\ell\}$. Observe that an integer in each cell in the puzzle grid can be at most $b-1$.

On each cell with an integer $x$, $P$ publicly places a face-up $E_b(x)$. On each unnumbered cell, $P$ secretly places a face-down $E_b(x)$, where $x$ is an integer in that cell in $P$'s solution. Then, $P$ turns all cards face-down.

For each cell $c$, $P$ performs the following steps to check the constraint on $c$.
\begin{enumerate}
	\item Let $c_1,c_2,...,c_t$ be all cells along the direction the arrow in $c$ points to. Let $E_b(d)$ be the sequence on $c$, and $E_b(d_1),E_b(d_2),...,E_b(d_t)$ be the sequences on $c_1,c_2,...,c_t$, respectively.
	\item Apply the copy protocol to make a copy of each of $E_b(d),E_b(d_1),E_b(d_2),...,$ $E_b(d_t)$.
	\item Apply the set size protocol to the copies of $E_b(d_1),E_b(d_2),...,E_b(d_t)$ to get outputs $E_2^\clubsuit(z_1),E_2^\clubsuit(z_2),...,E_2^\clubsuit(z_b)$, where $z_1+z_2+...+z_b=|\{d_1,d_2,...,d_t\}|$.
	\item Apply the summation protocol to $E_2^\clubsuit(z_1),E_2^\clubsuit(z_2),...,E_2^\clubsuit(z_b)$ to get output $E_{b+1}^\clubsuit(z)$, where $z=z_1+z_2+...+z_b$. Note that $z$ is the number of different integers in $c_1,c_2,...,c_t$.
	\item Append a \mybox{$\heartsuit$} to the right of the copy of $E_b^\clubsuit(d)$ from Step 2 to make it an $E_{b+1}^\clubsuit(d)$. Then, apply the comparing protocol to $E_{b+1}^\clubsuit(d)$ and $E_{b+1}^\clubsuit(z)$ to check that $d=z$.
\end{enumerate}

$P$ performs these steps for every cell in the grid. If all verifications pass, then $V$ accepts.

Our protocol uses $\Theta(bk\ell)$ cards and $\Theta(bk\ell)$ shuffles.

\subsection{Proof of Correctness and Security}
We will prove the perfect completeness, perfect soundness, and zero-knowledge properties of our main protocol.

\begin{lemma}[Perfect Completeness] \label{lem1}
If $P$ knows a solution of the Zeiger puzzle, then $V$ always accepts.
\end{lemma}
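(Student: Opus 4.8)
The plan is to trace the honest prover's execution cell-by-cell and argue that every verification step succeeds. The key observation is that perfect completeness for the full protocol reduces to perfect completeness of each subprotocol, since these are applied as black boxes and the only way $V$ rejects is inside one of them.

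First I would fix an honest prover $P$ who holds a genuine solution, and note that $P$ places $E_b(x)$ on each unnumbered cell where $x$ is the (correct) integer from that solution, and $E_b(x)$ on each numbered cell matching the given clue. For an arbitrary cell $c$ with arrow value $d$ and target cells $c_1,\dots,c_t$ holding values $d_1,\dots,d_t$, I would walk through Steps 2--5. By the correctness of the copy protocol (established in the excerpt), Step 2 faithfully duplicates each $E_b(\cdot)$ without a rejection, since every input sequence is in the correct $E_b$ format. By the analysis of the set size protocol, Step 3 yields $E_2^\clubsuit(z_1),\dots,E_2^\clubsuit(z_b)$ with $z_1+\cdots+z_b=|\{d_1,\dots,d_t\}|$. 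By the summation protocol, Step 4 produces $E_{b+1}^\clubsuit(z)$ with $z=|\{d_1,\dots,d_t\}|$, the number of distinct integers along the arrow's direction.

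The crux is Step 5: the comparing protocol rejects precisely when its two inputs encode unequal values. Since $P$ holds a valid solution, the Zeiger constraint at $c$ states exactly that $d$ equals the number of distinct integers among $c_1,\dots,c_t$, i.e. $d=z$. Hence the two sequences $E_{b+1}^\clubsuit(d)$ and $E_{b+1}^\clubsuit(z)$ encode the same value, and the comparing protocol does nothing (no rejection). I would then remark that this holds simultaneously for every cell $c$ in the grid, because the solution satisfies all constraints at once.

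The main obstacle, if any, is bookkeeping rather than conceptual: one must confirm that the encoding sizes line up so each subprotocol receives well-formed input—specifically that $b=\max\{k,\ell\}$ is large enough that every cell value lies in $\{0,1,\dots,b-1\}$, so that no $E_b$ sequence is ever malformed and the copy protocol's format check never triggers on honest input. This is immediate from the observation that any integer in the grid is at most $b-1$ (since the number of distinct values along any line is bounded by the line's length). With that in place, no subprotocol ever rejects, so $V$ accepts, completing the argument.
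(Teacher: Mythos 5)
Your proposal is correct and follows essentially the same route as the paper: the honest prover's placement is well-formed, the subprotocols' already-established correctness carries the values through, and the puzzle constraint gives $d=z$ at every cell, so the comparing protocol never rejects. The paper's own proof is simply a terser version of this, leaving the subprotocol correctness and format-check observations implicit.
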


\begin{proof}
Suppose $P$ knows a solution of the puzzle. $P$ can place sequences on all cells according to the solution.

Consider the verification of each cell $c$ with a sequence $E_b(d)$. Because of the constraint of the puzzle, $d$ is equal to the number of different integers in cells $c_1,c_2,...,c_t$, i.e. $d=z$. Therefore, the verification for $c$ will pass.

Since this is true for every cell, we can conclude that $V$ always accepts.
\end{proof}

\begin{lemma}[Perfect Soundness] \label{lem2}
If $P$ does not know a solution of the Zeiger puzzle, then $V$ always rejects.
\end{lemma}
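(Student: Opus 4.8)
The plan is to prove the contrapositive in the strong form dictated by perfect soundness: I will show that whenever $V$ accepts, the face-down sequences $P$ placed on the grid must encode a genuine solution of the Zeiger puzzle, so that if no solution exists (i.e.\ $P$ cannot know one), acceptance is impossible. The first step is to observe that each placed sequence is forced to be a well-formed $E_b(x)$ for some $x \in \{0,1,\dots,b-1\}$. This is guaranteed by the format-checking built into the copy protocol (Subprotocol~\ref{sub1}, Step 5): whenever $P$ applies the copy protocol to a cell's sequence, $V$ rejects immediately unless that sequence consists of $q-1$ stacks $E_2^\clubsuit(0)$ and exactly one $E_2^\clubsuit(1)$. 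Since every cell's sequence is copied during the verification (either as $c$ itself or as some $c_j$ in a neighbouring cell's check, and every unnumbered cell has an arrow pointing somewhere and lies along some arrow), acceptance forces every sequence to be a valid $E_b(x)$. I would state this as the key structural consequence of acceptance.

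Next I would extract, for each cell $c$, the arithmetic meaning of the subprotocols applied in its verification. Let $E_b(d)$ be the sequence on $c$ and $E_b(d_1),\dots,E_b(d_t)$ the sequences on the cells $c_1,\dots,c_t$ along the arrow. By the correctness of the set size protocol (Subprotocol~\ref{sub2}), the outputs $E_2^\clubsuit(z_1),\dots,E_2^\clubsuit(z_b)$ satisfy $z_1+\cdots+z_b=|\{d_1,\dots,d_t\}|$; by the summation protocol (Subprotocol~\ref{sub3}), the value $z$ in the output $E_{b+1}^\clubsuit(z)$ equals $z_1+\cdots+z_b$, hence $z=|\{d_1,\dots,d_t\}|$, the number of distinct integers along the arrow. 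Finally, by the comparing protocol (Subprotocol~\ref{sub4}), if $V$ does not reject in Step~5 then $d=z$. Thus acceptance on cell $c$ forces $d=|\{d_1,\dots,d_t\}|$, which is exactly the Zeiger constraint for that cell.

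Combining the two steps, acceptance on every cell means that the integers encoded by the placed sequences form a complete filling of the grid (every unnumbered cell carries a genuine $E_b(x)$ with $x\in\{1,\dots,b-1\}$ after translating the encoding, and the fixed numbered cells retain their given values) in which every cell's value equals the count of distinct integers along its arrow. That filling is by definition a solution of the puzzle. Contrapositively, if the puzzle has no solution, then for any placement $P$ makes there is at least one cell whose constraint fails, so at least one comparing (or copy-format) check rejects, and $V$ rejects. I would phrase the conclusion as: if $V$ accepts, a solution exists and is recoverable from the cards, so a prover who does not know (equivalently, cannot produce) a solution always causes $V$ to reject.

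The main obstacle I anticipate is bookkeeping rather than conceptual: I must argue carefully that \emph{every} sequence is format-checked and that the encoding bijection between the integer $x$ in a cell and the sequence $E_b(x)$ is consistent across the numbered cells (placed face-up publicly, hence trivially well-formed) and the unnumbered cells (placed secretly). In particular I should note that the $E_b(x)$ encoding ranges over $x\in\{0,\dots,b-1\}$ while puzzle entries are positive integers in $\{1,\dots,b-1\}$, and confirm that the fixed clues and the comparing step jointly rule out a spurious assignment of $0$; the cleanest way is to remark that every cell lies along some arrow and its own value is compared against a set size that is at least $1$ (the arrow's target list is nonempty at the boundary cells by the grid's construction, and at worst I invoke that a solution's values are automatically in the valid range since $z\ge 1$ whenever $t\ge 1$). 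I expect this range/format argument to be the only place needing real care; the arithmetic forcing from the subprotocols is immediate given their stated correctness.
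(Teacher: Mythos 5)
Your proposal is correct and is essentially the contrapositive restatement of the paper's own proof: both arguments rest on the same two ingredients, namely the format check built into the copy protocol (rejecting malformed sequences) and the set-size/summation/comparing chain forcing $d=z$ at every cell, so that any placement passing all checks encodes a valid solution. The extra care you take with the encoding range $x \in \{0,1,\dots,b-1\}$ versus positive puzzle entries is a refinement the paper glosses over, but it does not constitute a different approach.
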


\begin{proof}
Suppose $P$ does not know a solution of the puzzle. First, if a sequence on some cell does not have the correct format, i.e. does not consist of $b-1$ $E_2^\clubsuit(0)$s and one $E_2^\clubsuit(1)$, it will be detected by the copy protocol, and $V$ will immediately reject.

Suppose all sequences have the correct format. Since $P$ does not know a solution, the sequence on at least one cell, say $c$, must violate the constraint of the puzzle. During the verification for the cell $c$, we will have $d \neq z$, and $V$ will reject during the comparing protocol.

Hence, we can conclude that $V$ always rejects.
\end{proof}

\begin{lemma}[Zero-Knowledge] \label{lem3}
During the verification, $V$ obtains no information about $P$'s solution.
\end{lemma}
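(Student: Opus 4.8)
The plan is to prove zero-knowledge by exhibiting a simulator $S$ that, without knowing $P$'s solution, produces a transcript whose probability distribution is identical to that of the real protocol. The standard strategy for card-based ZKPs is to show that every card that $V$ ever sees turned face-up reveals only a value drawn from a distribution that is independent of the secret witness. Concretely, I would argue that the only steps in which cards are revealed to $V$ are: Step~5 of the copy protocol, Step~2(b) of the set size protocol, Step~2(e) of the summation protocol, and Step~3 of the comparing protocol. Every other operation either places face-down cards or applies a shuffle, and these leak nothing. So it suffices to analyze the face-up patterns at exactly these four points.

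The key observation, which I would state as the heart of the argument, is that each revealing step is immediately preceded by a pile-shifting or pile-scramble shuffle that randomizes the positions in a way that destroys all witness-dependent information. First I would treat the copy protocol: after the pile-shifting shuffle in Step~4, the single $E_2^\clubsuit(1)$ in Row~1 sits in a uniformly random column among the $q$ possibilities, independent of $x$, so $V$ sees a uniformly random cyclic position. Next, in the set size protocol, each pile-scramble shuffle (Steps~2(a) and~3) applies a uniformly random permutation, so when Row~$i$ is turned over in Step~2(b), the unique $E_2^\clubsuit(1)$ (if present) appears in a uniformly random column; the pattern $V$ observes is just one $E_2^\clubsuit(1)$ among $E_2^\clubsuit(0)$s in a random position, independent of the actual values $x_i$. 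The summation protocol is analogous: after the pile-shifting shuffle in Step~2(d), the unique \mybox{$\heartsuit$} in Row~2 is in a uniformly random column. Finally, in the comparing protocol, the pile-scramble shuffle in Step~2 puts the two \mybox{$\clubsuit$}s in a uniformly random column pair; when $x_1=x_2$ (the only case that does not cause rejection), $V$ sees two \mybox{$\clubsuit$}s in the same uniformly random column, independent of the common value.

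Having established that each individual reveal follows a fixed, witness-independent distribution (uniform over cyclic shifts or uniform over permutation images), I would then construct the simulator $S$ by having it perform exactly the same sequence of operations as the real protocol, except that wherever $P$ would place a face-down witness-dependent sequence on an unnumbered cell, $S$ places an arbitrary correctly-formatted dummy sequence, and at each revealing step $S$ samples the face-up pattern directly from the corresponding fixed distribution (a uniformly random position for the distinguished card, or a matching-column pair in the comparing step). Because the shuffles guarantee that the real face-up patterns are exactly these distributions and are mutually independent across steps given the randomness of the shuffles, the joint transcript produced by $S$ is identically distributed to the real one. I would note that $S$ runs in probabilistic polynomial time since it only samples $\Theta(bk\ell)$ uniform positions and performs the same polynomial number of card manipulations.

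The main obstacle I anticipate is the bookkeeping of independence: I must verify that the reveals across the many cells and many subprotocol invocations are jointly simulable, not merely marginally. Since each revealing step is guarded by its own fresh shuffle whose randomness is independent of all previously revealed information, the conditional distribution of each reveal given the past is still the fixed uniform distribution; this is what lets the per-step simulation compose into a faithful full-transcript simulation. I would make this precise by arguing inductively over the ordered list of reveals, showing at each step that the observed pattern is conditionally independent of the witness given the public inputs, so the simulator's step-by-step sampling reproduces the exact joint law.
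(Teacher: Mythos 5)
Your proposal is correct and takes essentially the same approach as the paper: it identifies the same four revealing steps (Step~5 of the copy protocol, Step~2(b) of the set size protocol, Step~2(e) of the summation protocol, and Step~3 of the comparing protocol) and argues that each reveals only a uniformly distributed, witness-independent pattern because of the immediately preceding shuffle. Your explicit simulator construction and the inductive argument for joint (not merely marginal) simulatability add rigor the paper leaves implicit, and you also correctly attribute the fourth reveal to the comparing protocol, which the paper's proof mislabels as the set size protocol.
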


\begin{proof}
We will prove that any interaction between $P$ and $V$ can be simulated by a simulator $S$ that does not know $P$'s solution. It is sufficient to show that all distributions of cards that are turned face-up can be simulated by $S$.

\begin{itemize}
	\item In Step 5 of the copy protocol, because of the pile-shifting shuffle, the $E_2^\clubsuit(1)$ has probability $1/q$ to be at each of the $q$ columns. Therefore, this step can be simulated by $S$.
	\item In Step 2(b) in each $i$-th iteration of the set size protocol, because of the pile-scramble shuffle, the $E_2^\clubsuit(1)$ has probability $1/q$ to be at each of the $q$ columns. Therefore, this step can be simulated by $S$.
	\item In Step 2(e) in each $i$-th iteration of the summation protocol, because of the pile-shifting shuffle, the \mybox{$\heartsuit$} has probability $1/(i+1)$ to be at each of the $i+1$ columns. Therefore, this step can be simulated by $S$.
	\item In Step 3 of the set size protocol (in the case that $V$ does not reject), because of the pile-scramble shuffle, the \mybox{$\clubsuit$}s in both rows are in the same column and have probability $1/q$ to be at each of the $q$ columns. Therefore, this step can be simulated by $S$.
\end{itemize}

Hence, we can conclude that $V$ obtains no information about $P$'s solution.
\end{proof}

\section{Future Work}
We proved the NP-completeness of a pencil puzzle Zeiger and constructed a card-based physical ZKP protocol for it. Future work includes proving the NP-completeness of other well-known pencil puzzles, as well as constructing card-based ZKP protocols for them.

\end{document}